\def\blfootnote{\xdef\@thefnmark{}\@footnotetext}    
\newtheorem{theorem}{Theorem}
\newtheorem{lemma}{Lemma}
\begin{document}

\title{Multi-User Coverage Probability of Uplink Cellular Systems: a Stochastic Geometry Approach}
\vspace{-6mm}
\author{\IEEEauthorblockN{F. Javier Martin-Vega, F. Javier Lopez-Martinez, Gerardo Gomez and Mari Carmen Aguayo-Torres}}

\maketitle

\let\thefootnote\relax\footnote{
This work has been submitted to the IEEE for possible publication. Copyright may be transferred without notice, after which this version may no longer be accessible. \\
F. Javier Martin-Vega, Gerardo Gomez and Mari Carmen Aguayo-Torres are with the Departamento de Ingenier\'ia de Comunicaciones, Universidad de M\'alaga, 29071, M\'alaga, Spain (email: \{fjmvega, ggomez, aguayo\}@ic.uma.es). \\
F. Javier Lopez-Martinez is with the Wireless Systems Lab, Electrical Engineering, Stanford University, Stanford, CA (email: fjlm@stanford.edu). \\
This work has been partially supported by the Spanish Government and FEDER under projects TEC2010-18451 and COFUND2013-40259, the University of Malaga and the European Union under Marie-Curie COFUND U-mobility program (ref. 246550), and by the company AT4 wireless S.A.
}

\begin{abstract}
We analyze the coverage probability of multi-user uplink cellular networks with fractional power control. We use a stochastic geometry approach where the mobile users are distributed as a Poisson Point Process (PPP), whereas the serving base station (BS) is placed at the origin. Using conditional thinning, we are able to calculate the coverage probability of $k$ users which are allocated a set of orthogonal resources in the cell of interest, obtaining analytical expressions for this probability considering their respective distances to the serving BS. These expressions give useful insights on the interplay between the power control policy, the interference level and the degree of fairness among different users in the system.
\end{abstract}
\IEEEpeerreviewmaketitle

\section{Introduction}
\label{sec:Introduction}
Aiming to satisfy the ever-increasing demand for higher data rates, modern cellular technologies like Long Term Evolution (LTE) use aggressive frequency reuse policies, which have accentuated the problem of inter-cell interference compared to previous standards \cite{Himayat10}. This interference is highly dependent on the transmitted power of the different users, whose random positions and mobility affects the ability of the base stations (BS) to mitigate this problem. This causes huge differences on the received Signal to Interference plus Noise Ratio (SINR) due to path loss, being specially critical for cell-edge users, that tend to have a poorer performance compared to users located closer to the BS.

Each BS must also ensure a certain Quality of Service (QoS) for every user; hence, power control becomes a fundamental mechanism in the uplink (UL), as it impacts on the fairness among the users in the serving cell as well as on the level of interference caused to neighbor cells. Compared to the downlink (DL), the UL poses additional challenges since: (1) users positions are coupled with its serving BS, and (2) when power control is used, the interference level coming from a certain user depends not only on the distance of the BS to this user, but also on the distance between this interfering user and its serving BS. Additionally, even without power control, the interference behavior in the UL and DL is rather dissimilar. In the DL, those transmissions intended for cell-edge users tend to have stronger interference than for cell-interior ones, whereas in the UL all transmissions from the users inside the cell experience an interference with the same statistics.

%

Stochastic geometry has emerged as a promising tool to analyze the performance of cellular systems, being an alternative to traditional approaches based on Wyner-type interference \cite{Wyner94} and hexagonal grid models \cite{Tukmanov13}, whose accuracy is known to be limited in different circumstances \cite{Xu11}. This approach typically considers the positions of transmitting nodes as a Poisson Point Process (PPP) where the receiver is placed at the origin \cite{Haenggi09} of a 2-D spatial grid. Despite being originally considered for ad-hoc and sensor networks due to the arbitrary positions of the nodes in such networks, the irregular cell patterns in modern cellular networks makes it the perfect technique to analyze their performance \cite{Andrews11}.

While most works based on random spatial models have focused on DL scenarios, their adequacy for modeling UL cellular networks has recently 
been addressed in \cite{Novlan13}. In this work, the authors provided the first known analytical results for the coverage probability of a typical user in a UL set-up, where fractional power control was implemented. As main assumptions, validated with realistic simulation models, they considered that the distances between interfering users and its serving BS are independent and identically distributed (i.i.d.), and that the BS falls in the Voronoi tessellation of each user. Based on this new approach, new analyses have been conducted in other UL scenarios involving fractional frequency reuse \cite{Novlan13b} or multi-tier cellular networks \cite{Elsawy14}.

Previous works in the literature are usually focused on only one active link between the transmitter and receiver nodes. Specifically, in \cite{Novlan13} their analysis considers the link between the serving BS of interest (placed at the origin) and a typical user. Since this randomly selected user can be located anywhere in the cell (cell interior, cell-edge, etc.), results are averaged over all spatial positions inside the cell. Although these results yield interesting insights on the performance of a typical user, they do not provide a clear understanding about the fairness among the users, or the performance of cell-edge users. Results concerning the coverage probability of UL cellular networks with multiple users are not available in the literature to the best of our knowledge.

In this paper, we present an analytical framework for the analysis of multi-user UL cellular systems with fractional power control, based on conditional thinning \cite{Dhillon13, Dhillon13b}. This technique has been used to model non-uniform user location distributions in DL transmissions \cite{Dhillon13} and different traffic load of each tier in heterogeneous networks \cite{Dhillon13b}. In our work, conditional thinning is used to obtain the set of interfering users for an arbitrary UL transmission allocated over one out of $k$ orthogonal resource groups. 

Using this new approach, the coverage probability of the $l^{th}$ user is obtained and ordered according to the distance from the user to the serving BS, which allocates $k$ orthogonal resource groups to users (${1\leq l\leq k}$). The joint distribution of the distances between the $l^{th}$ and ${k^{th}}$ users to the serving BS is also derived. Results give useful insights on the relation between power control and fairness among users. 

The rest of the paper is structured as follows. In Section \ref{sec:System Model}, we describe the system model and introduce our analytical framework based on conditional thinning. The main mathematical results are presented in Section \ref{sec:Math}, namely the joint distribution of the distances between the $l^{th}$ and ${k^{th}}$ users to the serving BS, and the multi-user coverage probability. Numerical results are given in Section \ref{sec:Numerical Results}, whereas main conclusions are drawn in section \ref{sec:Conclusions}.

\textit{Notation:} Throughout this paper, $|\cdot|$ stands for the Lebesgue measure, $\mathbb{E[\cdot]}$ for the expectation operator and $\mathbb{P[\cdot]}$ for a probability measure. Random Variables (RV) are represented with capital letters $X$ whereas deterministic variables are associated with lower case letters $x$. The conditional expectation of $X$ conditioned on $Y=y$ is denoted as $\mathbb{E}_{X|y} [X|y]$. $B(o, r)$ represents the closed ball centered at the origin $o$ being $r= \lVert x \rVert$ the distance from $x \in \mathbb{R}^2$ to $o$.


\section{System Model}
\label{sec:System Model}
\subsection{System Model Description}
%
%
%
%
In this paper we propose a system model that allows for a tractable analysis of multi-user UL scenarios with fractional power control, assuming one antenna at both transmitter and receiver sides. This model is illustrated in Fig. \ref{fig:System Model 1}.
%
%
\begin{figure}[ht]
\centering
\includegraphics[width=2.75in]{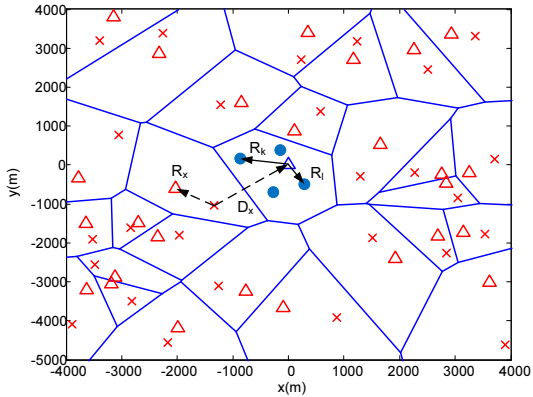}
\caption{System model of multi-user UL cellular system. BS are represented by triangles (blue: serving BS), users inside the serving cell are represented by blue dots, interfering users for the $l^{th}$ user transmission are depicted with a red cross. The distance from one interfering user to its serving BS ($R_x$) and to the target BS ($D_x$) are represented as an example.}
\label{fig:System Model 1}
\end{figure}

The target BS is considered to be placed at the origin, giving service to $k$ active users in $k$ orthogonal resource groups. Cells are assumed to be fully loaded, thus all available resource groups are used in the target and interfering cells and users are allocated a single resource group. 
These users, represented with blue dots, are ordered according to their distance to its serving BS, i.e. the origin. We focus on the $l^{th}$ user placed at distance $R_l$ from the origin with ${0 \leq R_1 \leq \cdots \leq R_l \leq \cdots \leq R_k}$. The BS positions of the interfering cells are indicated by red triangles, whereas the interfering users for the $l^{th}$ user data transmission are represented by red crosses. 

Since fractional power control is considered, the transmitted power depends on the distance between the user and its serving BS. This distance is represented as $R_x$ for an interfering user placed at $x \in \Phi_{i,l}$, where $\Phi_{i,l}$ denotes the random set of interfering user locations for $l^{th}$ user data transmission. Similarly, the distance between the interfering user located at $x$ and the target BS (i.e. the origin) is represented as $D_x$. 

Power loss due to propagation is modeled using a standard path loss model with $\alpha > 2$, whereas a Rayleigh model is assumed for small-scale fading. Fractional power control with parameter $\epsilon$ is assumed, hence the received signal power at distance $D_x$ from a user placed at distance $R_x$ from its BS is given by $G_x R_x^{\alpha \epsilon} D_x^{-\alpha}$, where $G_x$ is the fading coefficient that follows an exponential distribution with mean $1/\mu$. Thus, the SINR for the $l^{th}$ user data transmission follows the next expression
\begin{equation}
\label{eq:SINRl}
\text{SINR}_l = \frac{G_l R_l^{\alpha (\epsilon - 1)}}{I_l + \sigma^2}
\end{equation}
where $\sigma^2$ is the AWGN noise power and $I_l$ accounts for the interference experienced by the $l^{th}$ user transmission, given by 
\begin{equation}
\label{eq:Il}
I_l = \sum_{x \in \Phi_{i,l} } G_x R_x^{\alpha \epsilon} D_x^{-\alpha}
\end{equation}
It is important to note that in the UL, the interference suffered by all $k$ users transmission has the same statistics since interfering users positions scheduled at each resource group are expected to have the same distribution. Hence, from now on we will omit the sub-index $l$ in $\Phi_i$ for notation simplicity.
\subsection{Proposed Analytical Model}
The proposed model for multi-user uplink analysis is illustrated in Fig. \ref{fig:System Model 2}.
This model uses \textit{conditional thinning} in order to deal with multiple active links within the cell of interest.
\begin{figure}[ht]
\centering
\includegraphics[width=2.75in]{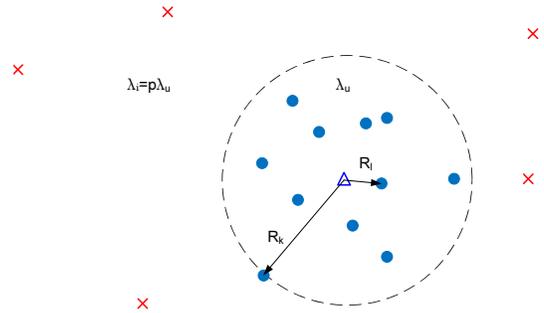}
\caption{Multi-User UL set-up based on conditional thinning for $k=11$. Interfering users for the $l^{th}$ user transmission are represented with red crosses.}
\label{fig:System Model 2}
\end{figure}
%

Let us consider the target BS to be placed at the origin and an uniform PPP $\Phi$ of intensity $\lambda$ over $\mathbb{R}^2$ that represents the set of active users. 
We use conditional thinning as follows:

First, the $k$ nearest points of $\Phi$ to the origin are selected. These points represent the locations of $k$ users scheduled in $k$ orthogonal resource groups. Then, thinning with probability $p$ is performed to all points except those $k$ inside the closed ball $B(o,r_k)$, being $r_k$ the distance to the $k^{th}$ point.

The resulting set of points outside the ball $B(o,r_k)$ is a non-uniform PPP $\Phi_i$ of intensity measure $\Lambda_i(A) = p  \lambda |A \setminus B(o,r_k)|$ \cite{Stoyan13}. Such random set of points represents the interfering user locations for the $l^{th}$ user data transmission. Since these interfering users are using one of $k$ available resources, we choose the thinning probability to be $p=1/k$. As the model considers that there is only one user scheduled per orthogonal resource group per cell, the intensity of BS is exactly the same as the intensity of interfering users. The random set of $k$ points around the origin $\Phi_d$ has an intensity measure $\Lambda_d (A) = \lambda |A \cap B(o,r_k)|$. Hence, the complete set of user locations $\Phi_u$ is given by the set of user locations within the target cell (using all available resource groups) and the set of interfering users scheduled in one resource group, i.e. $\Phi_u = \Phi_d + \Phi_i$.

As in \cite{Novlan13} distances $\{R_x\}$ from each interfering user to its serving BS are assumed to be i.i.d. RV following Rayleigh distributions with 
\begin{equation}
\label{eq:PDF Rx}
   f_{R_x} (r_x) = 2 \pi p \lambda r_x \mathrm{e}^{-p \lambda  \pi r_x^2}, \, r_x \geq 0
\end{equation}
Hence, notice that the proposed model is equivalent to the model presented in \cite{Novlan13} for $p=1$, and $k=l=1$.
\subsection{Simulation Model}
In order to asses the validity of the proposed analysis model, we also introduce a more realistic model for simulation. A uniform PPP of intensity $\lambda_b$ representing the BS locations is first considered. Since in the analysis model the intensity of BS is the same as that of interfering users,   we use $\lambda_b = \lambda/k$ aiming to compare the results of both models.


The association between user and BS is based on distance, hence the Voronoi tessellation is performed where one randomly chosen point is the target BS. Then, $k$ points representing the $k$ active users are placed randomly inside the target cell, whereas only one user is placed in each interfering cell. Notice that both sets of points, active users inside the target cell and interfering users, are not a PPP. To explain that, recall that the number of points falling in a Voronoi cell tends to be higher as the cell is bigger; in our case, one interfering user falls in any cell independently of its size. 
\section{Mathematical Results}
After presenting the analytical framework for the analysis of multi-user UL cellular networks, we now present the main mathematical contributions of this paper. First, we derive the joint distribution of the distances between the $l^{th}$ and $k^{th}$ users and the serving BS. Then, we use this result to calculate the coverage probability of the $l^{th}$ user in the investigated scenario.
\vspace{-6mm}
\label{sec:Math}
\subsection{Joint Distribution of Distances}
\label{sec:Joint Distribution of Distances}
In the analytical model, the $k$ users of interest are ordered according to their distances to the serving BS (i.e., the origin), and the interfering users are located at a distance greater than $R_k$. This interdependence affects the distribution of the SINR for the $l^{th}$ user transmission, due to the inherent correlation between 
$R_l$ and $R_k$. In the next lemma, we calculate their joint pdf.

\begin{lemma}
The joint pdf of $R_l$ and $R_k$ with $0<l<k$ is
\begin{equation}
\label{eq:Definition of PDF}
f_{R_l,R_k}(r_l, r_k) = \frac{4 \mathrm{e}^{-\pi r_k^2 \lambda} (\lambda \pi)^k r_k
 r_l^{2l-1} (r_k^2 - r_l^2)^{k-l-1} }{(k-l-1)! (l-1)!}
\end{equation}
where $0 \leq r_l \leq r_k$.
\end{lemma}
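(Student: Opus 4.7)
The plan is to exploit the independence of Poisson counts in disjoint regions and derive the joint density as an infinitesimal probability element. I will use the homogeneous PPP $\Phi$ of intensity $\lambda$ and the fact that the number of points of $\Phi$ in any Borel set $A \subset \mathbb{R}^2$ is Poisson with mean $\lambda |A|$, independently across disjoint sets.

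First I would characterise the event $\{R_l \in [r_l, r_l + dr_l],\, R_k \in [r_k, r_k + dr_k]\}$ with $0 \le r_l \le r_k$. For this event to occur, the PPP must place: (i) exactly $l-1$ points inside the open ball $B(o, r_l)$, (ii) one point in the thin annulus at radius $r_l$ (of area $2\pi r_l\, dr_l$), (iii) exactly $k - l - 1$ points in the annulus $B(o, r_k) \setminus B(o, r_l)$, and (iv) one point in the thin annulus at radius $r_k$ (of area $2\pi r_k\, dr_k$). No condition is needed outside $B(o, r_k)$, since only the ordering up to the $k$-th point matters. Because these four regions are pairwise disjoint, the four Poisson counts are mutually independent.

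Next I would multiply these probabilities, using $\mathbb{P}[\mathrm{Pois}(\mu) = n] = e^{-\mu}\mu^n/n!$, to get
\begin{equation*}
\mathbb{P}[\cdot] = \frac{(\lambda \pi r_l^2)^{l-1}}{(l-1)!}\,e^{-\lambda \pi r_l^2} \cdot 2\pi \lambda r_l\, dr_l \cdot \frac{(\lambda \pi (r_k^2 - r_l^2))^{k-l-1}}{(k-l-1)!}\, e^{-\lambda \pi (r_k^2 - r_l^2)} \cdot 2\pi \lambda r_k\, dr_k .
\end{equation*}
Collecting the exponentials into $e^{-\lambda \pi r_k^2}$, grouping all factors of $\lambda \pi$ into $(\lambda \pi)^k$, and identifying the remaining coefficient of $dr_l\, dr_k$ with $f_{R_l, R_k}(r_l, r_k)$ yields exactly \eqref{eq:Definition of PDF}.

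I do not expect a real obstacle here: the argument is essentially the classical derivation of the joint distribution of order statistics of a 1-D Poisson process, transported to 2-D via the mapping $r \mapsto \lambda \pi r^2$ (equivalently, squared radii form a unit-rate 1-D PPP, and the Jacobian $4(\lambda \pi)^2 r_l r_k$ produces the factor of $4$ and the $r_l\, r_k$ outside the bracketed terms). The only point requiring mild care is to take the infinitesimal limit correctly, ensuring that the annulus $B(o, r_k) \setminus B(o, r_l + dr_l)$ contributes area $\pi(r_k^2 - r_l^2) + o(dr_l)$, so that the $o(\cdot)$ terms vanish upon dividing by $dr_l\, dr_k$ and the stated density is obtained on the domain $0 \le r_l \le r_k$.
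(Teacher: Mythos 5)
Your proposal is correct and follows essentially the same route as the paper's proof: both partition the plane into the same four disjoint annular regions, invoke independence of the Poisson counts on disjoint sets, multiply the corresponding Poisson probabilities, and take the infinitesimal limit to read off the density. You simply carry out the final multiplication and collection of terms explicitly, which the paper leaves to the reader.
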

\begin{IEEEproof}
The calculation of the joint pdf follows a similar procedure as in \cite{Dhillon13}. Hence, we define disjoint sets in order to use the independence property of the PPP. Let us consider the next disjoint sets
\begin{align}
& \Psi_1 = \lbrace x \in \mathbb{R}^2: \lVert  x \rVert \leq r_l \rbrace \nonumber \\
& \Psi_2 = \lbrace x \in \mathbb{R}^2: r_l < \lVert  x \rVert \leq r_l+\mathrm{d}r_l \rbrace \nonumber \\
& \Psi_3 = \lbrace x \in \mathbb{R}^2: 
   r_l + \mathrm{d}r_l < 
   \lVert  x \rVert \leq r_k \rbrace \nonumber \\
& \Psi_4 = \lbrace x \in \mathbb{R}^2: 
   r_k < \lVert  x \rVert \leq r_k + \mathrm{d}r_k \rbrace 
\end{align}
The joint pdf of $R_l$ and $R_k$ with $0<l<k$ is by definition 
\begin{equation}
\label{eq:Definition of PDF}
f_{R_l,R_k}(r_l, r_k) = \lim_{\overset{\mathrm{d}r_l \to 0} {\underset{\mathrm{d}r_k \to 0}{}} } \frac{\mathbb{P} \lbrace R_l \in \Psi_2, 
   R_k \in \Psi_4 \rbrace }{\mathrm{d}r_l \mathrm{d}r_k}
\end{equation}
Notice that the numerator can be expressed as follows:
\begin{align}
\label{eq: Numerator Joint PDF}
\mathbb{P}& \lbrace R_l \in \Psi_2, R_k \in \Psi_4 \rbrace = \nonumber \\
   & \mathbb{P} \lbrace \Phi (\Psi_1) = l-1 \rbrace 
   \cdot \mathbb{P} \lbrace \Phi (\Psi_2) = 1 \rbrace \cdot \nonumber \\
     & \mathbb{P} \lbrace \Phi (\Psi_3) = k-l-1 \rbrace 
   \cdot \mathbb{P} \lbrace \Phi (\Psi_4) = 1 \rbrace 
\end{align}
being $\Phi(\Psi)$ a random counting measure of a Borel set $\Psi$. Since $\Phi$ is a uniform PPP, $\Phi(\Psi)$ follows Poisson distribution with mean $\lambda \lvert \Psi \rvert$ \cite{Stoyan13}. Substituting the probability of each event in (\ref{eq: Numerator Joint PDF}) and calculating the limits in (\ref{eq:Definition of PDF}) yields the desired pdf.
\end{IEEEproof}
Figs. \ref{fig:Joint PDF l=2 k=4} and \ref{fig:Joint PDF l=2 k=50} illustrate the joint pdf of the distances for the second and the $k^{th}$ user, when $k=4$ and $k=50$, respectively. The correlation is more noticeable when $l$ and $k$ have similar values. 
\begin{figure}[ht]
\centering
\includegraphics[width=2.75in]{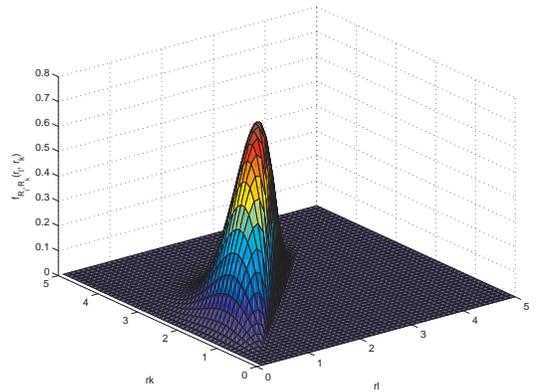}
\caption{ Joint pdf for $l=2$ and $k=4$ with $\lambda=0.24$ }
\label{fig:Joint PDF l=2 k=4}
\end{figure}
\begin{figure}[ht]
\centering
\includegraphics[width=2.75in]{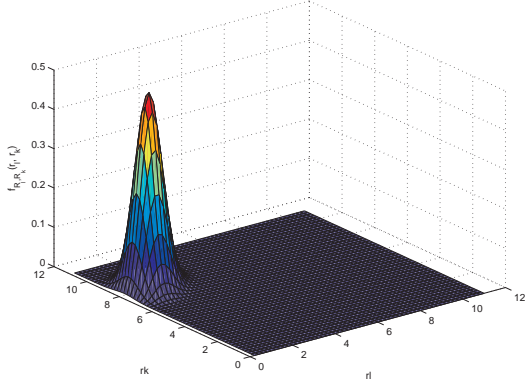}
\caption{ Joint pdf for $l=2$ and $k=50$ with $\lambda=0.24$ }
\label{fig:Joint PDF l=2 k=50}
\end{figure}
\vspace{-2mm}
\subsection{Multi-User Coverage Probability}
\label{sec:Coverage Probability}
The coverage probability represents the probability for a user to have a SINR higher than certain threshold $t$. The main result is stated in Theorem \ref{theor:Multi-user coverage}, which corresponds to the coverage probability of the the $l^{th}$ user.
\begin{theorem}[Multi-user coverage probability]
\label{theor:Multi-user coverage}
The coverage probability of the $l^{th}$ user considering a system with $k$ orthogonal resource groups that are distributed among $k$ active users with $l < k$ is given by: 
\begin{align}
p_c(l,k,t,\lambda,& p,\alpha, \epsilon, \mu, \sigma^2) \nonumber \\
   &= \mathbb{E}_{R_l,R_k} \left[ \xi (r_l, r_k) \right] \nonumber \\
   &= \int_{0}^{\infty} \int_{r_l}^{\infty} \xi (r_l, r_k) f_{R_l,R_k}(r_l, r_k) \mathrm{d}r_k \mathrm{d}r_l
\end{align}
where $f_{R_l,R_k}(r_l, r_k)$ is the joint pdf of distances and
\begin{equation}
   \xi (r_l, r_k) = \mathrm{e}^{-\mu t \sigma^2 r_l^{\alpha(1-\epsilon)}} 
   \mathcal{L}_{I_l|r_l,r_k}(\mu t r_l^{\alpha(1-\epsilon)})
\end{equation}
being $\mathcal{L}_{I_l|r_l,r_k}(s)$ the Laplace transform of the interference conditioned on $r_l$ and $r_k$. This term evaluated at 
$s=\mu t r_l^{\alpha(1-\epsilon)}$ has the following expression
\begin{align}
   & \mathcal{L}_{I_l|r_l,r_k}(\mu t r_l^{\alpha(1-\epsilon)}) = \nonumber \\
   & \mathrm{exp}\left( -2 \pi p \lambda \int_{r_k}^{\infty} 
      \left(1 - \int_{0}^{\infty} 
      \frac{\pi p \lambda \mathrm{e}^{-p \lambda \pi q }}
      {1 + t r_l^{\alpha(1-\epsilon)} q^{\alpha \epsilon / 2} v^{-\alpha} } \mathrm{d}q
      \right)v \mathrm{d}v\right)
\end{align}
\end{theorem}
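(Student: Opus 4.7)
The plan is to start from the definition of coverage probability and exploit the Rayleigh fading on the desired link in the usual way, which converts the SINR threshold event into a Laplace transform of the interference evaluated at a geometry-dependent argument. First I would write $p_c = \mathbb{P}[\mathrm{SINR}_l > t]$, substitute (\ref{eq:SINRl}), and rearrange to the equivalent event $\{G_l > \mu t\, r_l^{\alpha(1-\epsilon)}(I_l + \sigma^2)\}$ conditioned on $R_l = r_l$, $R_k = r_k$, and $I_l$. Since $G_l \sim \mathrm{Exp}(\mu)$, the conditional survival probability equals $\exp(-\mu t r_l^{\alpha(1-\epsilon)}(I_l+\sigma^2))$; taking the expectation over $I_l$ with $(r_l,r_k)$ held fixed separates the deterministic noise factor $\mathrm{e}^{-\mu t \sigma^2 r_l^{\alpha(1-\epsilon)}}$ from $\mathcal{L}_{I_l|r_l,r_k}(\mu t r_l^{\alpha(1-\epsilon)})$, producing $\xi(r_l,r_k)$. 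De-conditioning against the joint pdf obtained in Lemma~1 then gives the stated outer double integral.

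The second step is to evaluate $\mathcal{L}_{I_l|r_l,r_k}(s)$ at $s=\mu t r_l^{\alpha(1-\epsilon)}$. By the conditional thinning construction, given $r_k$ the interfering set $\Phi_i$ is a PPP of intensity $p\lambda$ on $\mathbb{R}^2\setminus B(o,r_k)$ with i.i.d.\ marks $\{G_x,R_x\}$ (the latter by the assumption imported from \cite{Novlan13}), and (\ref{eq:Il}) is a shot-noise functional of these marks. Applying the PGFL of a marked PPP yields
\begin{equation*}
\mathcal{L}_{I_l|r_l,r_k}(s) = \exp\!\left(-p\lambda \int_{\mathbb{R}^2\setminus B(o,r_k)} \!\Bigl(1 - \mathbb{E}_{G,R_x}\bigl[\mathrm{e}^{-s G R_x^{\alpha\epsilon}\|x\|^{-\alpha}}\bigr]\Bigr)\,\mathrm{d}x\right).
\end{equation*}
I would then compute the inner expectation in two stages: first average over $G\sim\mathrm{Exp}(\mu)$, which turns the integrand into $1/(1 + (s/\mu)R_x^{\alpha\epsilon}\|x\|^{-\alpha})$, and then average over $R_x$ using its Rayleigh pdf (\ref{eq:PDF Rx}) with the substitution $q=r_x^2$ to absorb the Jacobian. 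Converting the remaining $\mathbb{R}^2$ integral to polar coordinates with $v=\|x\|$ collapses the angular dependence into a factor of $2\pi$, yielding the radial integral from $r_k$ to $\infty$ that appears in the statement.

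The main obstacle is not any isolated calculation but the bookkeeping of what is conditioned on at each stage. In particular, factoring the PGFL requires that the power-control distances $\{R_x\}$ be independent both of each other and of the displacements $D_x=\|x\|$ to the origin; this is precisely the approximation inherited from \cite{Novlan13} and must be invoked explicitly when pulling the $R_x$-expectation inside the integral over $x$. A related subtlety is that the radial integration starts at $r_k$ rather than at $0$ because the thinning step forces every interferer to lie strictly outside $B(o,r_k)$; preserving this exclusion ball throughout the change of variables, rather than absorbing it into an unconditional Campbell-type formula, is what ties the Laplace transform to the correct conditioning variable $r_k$ used in the outer expectation.
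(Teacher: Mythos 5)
Your proposal is correct and follows essentially the same route as the paper: the exponential-fading argument reduces coverage to the noise factor times the conditional Laplace transform of $I_l$, the PGFL of the thinned PPP outside $B(o,r_k)$ with the i.i.d.\ Rayleigh marks $R_x$ gives the stated double-integral exponent (your substitution $q=r_x^2$ is exactly how the paper's inner integral arises), and de-conditioning against the joint pdf of $(R_l,R_k)$ completes the argument. The only cosmetic difference is that the paper first conditions on $r_l$ alone and reintroduces $r_k$ through $f_{R_k|r_l}=f_{R_l,R_k}/f_{R_l}$, whereas you condition on both distances from the outset; the two orderings are equivalent.
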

\begin{IEEEproof}
The coverage probability for the $l^{th}$ user can be expressed as
\begin{align}
\label{eq:Multi - User Coverage Probability}
p_c(& l,k, t, \lambda,p,\alpha, \epsilon, \mu, \sigma^2) 
   = \mathbb{P}\left[ \mathrm{SINR}_l > t\right] \nonumber \\
   &\overset{(a)}{=} \int_0^\infty \mathbb{P}
      \left[\mathrm{SINR}_l > t \rvert r_l\right] 
         f_{R_l}(r_l) \mathrm{d}r_l \nonumber \\
   &= \int_0^\infty 
      \mathbb{P}\left[G_l > t(I_l + \sigma^2)r_l^{\alpha (\epsilon - 1)} \rvert
       r_l\right] 
      f_{R_l}(r_l) \mathrm{d}r_l \nonumber \\
   &\overset{(b)}{=} \int_0^\infty 
      \mathbb{E}_{I_l} 
      \left[
      \mathbb{P}\left[G_l > t(i_l + \sigma^2)r_l^{\alpha (\epsilon - 1)} 
         \rvert r_l,i_l\right]
      \right]
      f_{R_l}(r_l) \mathrm{d}r_l \nonumber \\
   &\overset{(c)}{=}  \int_0^\infty 
      \mathrm{e}^{-\mu t \sigma^2 r_l^{\alpha (\epsilon - 1)}}
      \mathbb{E}_{I_l|r_l} 
      \left[
      \mathrm{e}^{-\mu t I_l r_l^{\alpha (\epsilon - 1)}} 
         \rvert r_l\right]
      f_{R_l}(r_l) \mathrm{d}r_l      
\end{align}
where $(a)$ and $(b)$ follow from the total probability theorem \cite{Papoulis}, while $(c)$ follows from the fact that $G_l$ has an exponential distribution with mean $1/\mu$. 

The term $\mathcal{L}_{I_l|r_l}(s) = \mathbb{E}_{I_l|r_l} 
      \left[
      \mathrm{e}^{I_l} 
         \rvert r_l\right]$
represents the Laplace transform of the interference conditioned on $r_l$. The RV $R_l$ and $R_k$ are correlated as $R_l \leq R_k$. Since $I_l$ depends on $R_k$ due to the fact that the interfering users are placed farther than $R_k$, the RV $I_l$ also depends on $R_l$. Hence we have to deal with such dependence as follows
\begin{align}
\label{eq:Laplace cond rl}
\mathcal{L}_{I_l|r_l}(s) & 
   = \mathbb{E}_{I_l|r_l, r_k} 
      \left[ 
         \mathbb{E}_{R_k}\left[  \mathrm{e}^{-s I_l} \rvert r_l,r_k \right]
      \right] \nonumber \\
   &= \mathbb{E}_{I_l|r_l, r_k} 
      \left[ 
           \int_{r_l}^{\infty} \mathrm{e}^{-s I_l} f_{R_k \lvert r_l}(r_k) 
           \mathrm{d}r_k  \rvert r_l,r_k
      \right] \nonumber \\
   &= \int_{r_l}^{\infty} 
      \mathbb{E}_{I_l|r_l, r_k} \left[ \mathrm{e}^{-s I_l} \rvert r_l,r_k \right] 
      f_{R_k \lvert r_l}(r_k) \mathrm{d}r_k  
\end{align}
where the total probability theorem and linearity of expectation operator have been used. 

The term  $\mathcal{L}_{I_l|r_l, r_k}(s)  = \mathbb{E}_{I_l|r_l, r_k} \left[ \mathrm{e}^{-s I_l} \rvert r_l,r_k \right]$ stands for the Laplace transform of the interference conditioned on $r_l$ and $r_k$ and can be expressed as
\begin{align}
\label{eq:Laplace cond rl and r_k}
&\mathcal{L}_{I_l|r_l, r_k}(s) = \mathbb{E}_{\Phi_i, \{G_x\} } 
   \left[
   \mathrm{e}^{-s \sum_{x \in \Phi_i} G_x R_x^{\alpha \epsilon} D_x^{-\alpha}} 
   \right] \nonumber \\
   &\overset{(a)}{=} \mathbb{E}_{R_x, D_x} 
   \left[
    \prod_{x \in \Phi_i} \mathbb{E}_{G_x} \left[
       \mathrm{e}^{-s G_x R_x^{\alpha \epsilon} D_x^{-\alpha}} \right]
   \right] \nonumber \\
   &\overset{(b)}{=} \mathbb{E}_{D_x} 
   \left[
    \prod_{x \in \Phi_i} \mathbb{E}_{R_x} \left[ 
    \frac{\mu}{\mu + s R_x^{\alpha \epsilon} D_x^{-\alpha}}
   \right] 
   \right] \nonumber \\
   &\overset{(c)}{=} \exp \left( -2 \pi \lambda p 
   \int_{r_k}^{\infty} \left(1 
   -  \int_{0}^{\infty} 
      \frac{\pi p \lambda \mathrm{e}^{-p \lambda \pi q } \mathrm{d}q}
      {1 + t r_l^{\alpha(1-\epsilon)} q^{ \frac{\alpha \epsilon }{2}} v^{-\alpha} }      
   \right) v \mathrm{d}v
   \right) 
\end{align} 
where the dependence with $R_l$ and $R_k$ resides in the non-uniform PPP $\Phi_i$ since its intensity is $\Lambda_i(A) = p  \lambda |A \setminus B(o,r_k)|$. Step $(a)$ comes from the fact that the fading is independent of the PPP, $(b)$ comes from the independence assumption between $R_x$ and $D_x$ and $(c)$ from the Probability Generating Functional (PGFL) \cite{Stoyan13} and the assumption of $R_x$ following a Rayleigh distribution as in \cite{Novlan13}.

Substituting (\ref{eq:Laplace cond rl and r_k}) and (\ref{eq:Laplace cond rl}) with $s=\mu t r_l^{\alpha(1-\epsilon)}$ in (\ref{eq:Multi - User Coverage Probability}) and taking into account that the conditional pdf $f_{R_k \lvert r_l}(r_k)$ can be obtained from the joint pdf and the marginal pdf of $R_l$ as $f_{R_k \lvert r_l}(r_k) = f_{R_l,R_k}(r_l, r_k)/f_{R_l}(r_l)$, the proof is completed.
\end{IEEEproof}
Theorem \ref{theor:Multi-user coverage} provides the coverage probability of the $l^{th}$ user with $l<k$. The following lemma gives the coverage probability for the cell-edge user. 
\begin{lemma}
The coverage probability of the $k^{th}$ user follows the next expression
\begin{equation}
p_c(k,t,\lambda, p,\alpha, \epsilon, \mu, \sigma^2) =
   \int_{0}^{\infty} \xi (r_k) f_{R_k}(r_k) \mathrm{d}r_k
\end{equation}
where $f_{R_k}(r_k)$ is the marginal pdf distribution of the $k^{th}$ nearest point \cite{Haenggi05} given by
\begin{equation}
f_{R_k}(r_k) = 2 \frac{(\lambda \pi)^{k}}{(k-1)!} r_k^{2k-1} \mathrm{e}^{-\lambda \pi r_k^2}
\end{equation}
and 
\begin{equation}
\xi (r_k) = \mathrm{e}^{-\mu t \sigma^2 r_k^{\alpha(1-\epsilon)}} 
   \mathcal{L}_{I_k|r_k}(\mu t r_k^{\alpha(1-\epsilon)})
\end{equation}
where $\mathcal{L}_{I_k|r_k}(\mu t r_k^{\alpha(1-\epsilon)})$ the Laplace transform of the interference affecting the $k^{th}$ user transmission conditioned on $r_k$, given by
\begin{align}
   & \mathcal{L}_{I_k|r_k}(\mu t r_k^{\alpha(1-\epsilon)}) = \nonumber \\
   & \mathrm{exp}\left( -2 \pi p \lambda \int_{r_k}^{\infty} 
      \left(1 - \int_{0}^{\infty} 
      \frac{\pi p \lambda \mathrm{e}^{-p \lambda \pi q }}
      {1 + t r_k^{\alpha(1-\epsilon)} q^{\alpha \epsilon / 2} v^{-\alpha} } \mathrm{d}q
      \right)v \mathrm{d}v\right)
\end{align}
\end{lemma}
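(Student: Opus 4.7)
The plan is to specialize the proof of Theorem \ref{theor:Multi-user coverage} to the cell-edge case $l=k$, in which the target user's distance $R_k$ \emph{coincides} with the lower bound on the distances of all interfering users. Because only a single random distance governs both the desired signal and the interference exclusion region, the joint density $f_{R_l,R_k}$ is no longer needed, and the nested double integral collapses to a single integration against the marginal $f_{R_k}$.

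First, I would write $p_c = \mathbb{P}[\mathrm{SINR}_k > t]$ and invoke the total probability theorem to condition on $R_k = r_k$. The marginal density $f_{R_k}$ is the classical distribution of the $k$-th nearest point of a homogeneous planar PPP \cite{Haenggi05}, obtained by combining a Poisson count of $k-1$ points strictly inside $B(o,r_k)$ with one point in an infinitesimal annulus at radius $r_k$, and dividing by $\mathrm{d}r_k$.

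Next, I would substitute (\ref{eq:SINRl}) with $l=k$ and use that the fading $G_k$ is exponential with mean $1/\mu$ to obtain
\begin{equation}
\mathbb{P}[\mathrm{SINR}_k > t \,|\, r_k, I_k = i_k] = \mathrm{e}^{-\mu t (i_k + \sigma^2) r_k^{\alpha(1-\epsilon)}}.
\end{equation}
Averaging over $I_k$ at fixed $r_k$ and factoring out the deterministic noise contribution identifies the remainder with $\mathcal{L}_{I_k|r_k}(\mu t r_k^{\alpha(1-\epsilon)})$, producing the factor $\xi(r_k)$ stated in the lemma.

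Finally, I would evaluate the conditional Laplace transform. The interferers form a non-uniform PPP of intensity $p\lambda$ restricted to $\mathbb{R}^2 \setminus B(o,r_k)$. Applying the PGFL, averaging the kernel $\mathrm{e}^{-s G_x R_x^{\alpha\epsilon} D_x^{-\alpha}}$ over the exponential fading $G_x$ to obtain $\mu/(\mu + s R_x^{\alpha\epsilon} D_x^{-\alpha})$, and then averaging over the Rayleigh $R_x$ with density (\ref{eq:PDF Rx}) — exactly the manipulation already carried out in step~(c) of (\ref{eq:Laplace cond rl and r_k}) — yields the displayed double integral, with the outer integration starting at $r_k$ and the inner integration performed over $q = r_x^2$. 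The main bookkeeping hazard is that $r_k$ simultaneously sets the exclusion radius of $\Phi_i$ and the value $s = \mu t r_k^{\alpha(1-\epsilon)}$ of the Laplace argument, but no analytical difficulty arises beyond what was already handled in Theorem~\ref{theor:Multi-user coverage}.
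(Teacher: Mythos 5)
Your proposal is correct and follows essentially the same route as the paper, whose proof simply observes that the argument of Theorem~\ref{theor:Multi-user coverage} carries over with the single simplification that for $l=k$ the desired signal and the interference exclusion radius are governed by the same random variable $R_k$, so only its marginal pdf is needed. Your expanded write-out of the conditioning, the exponential-fading step, and the PGFL computation (including the substitution $q=r_x^2$) matches the paper's derivation in Theorem~\ref{theor:Multi-user coverage} specialized to this case.
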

\begin{proof}
The proof is analogous to Theorem \ref{theor:Multi-user coverage} except from the fact that the SINR of the $k^{th}$ user transmission only depends on the distance to the origin of one particular user; note that when $l<k$ the SINR depends both on $R_l$ and $R_k$. Hence, the Laplace transform of the interference only depends on $R_k$ and only the marginal pdf of $R_k$ is necessary.
\end{proof}

\section{Numerical Results}
\label{sec:Numerical Results}
\subsection{Coverage probability}
We now evaluate the expressions for the coverage probability previously derived, and compare these results with our simulation model. Different values of the power control factor $\epsilon$ are used so as to provide a clear understanding of the relation between power control and fairness among users.

Fig. \ref{fig:fig_V_1} shows the coverage probability considering different numbers of orthogonal resources per cell, i.e. $k = \{10, 25, 50\}$, assuming a full power control policy ($\epsilon=1$). We see how the coverage probability is the same for all $k$ scheduled users, i.e. it does not depend on $l$ for both analytical and simulation models. This is coherent with the fact that full compensation of path loss makes all user transmissions to have the same average received power. Since the interference experienced by all user transmissions is the same, the coverage is also the same. Hence, in this case the fairness between users is maximal. We also observe how the analytical model provides slightly more pessimistic results than the simulation model.
\begin{figure}[ht]
\centering
\includegraphics[width=2.75in]{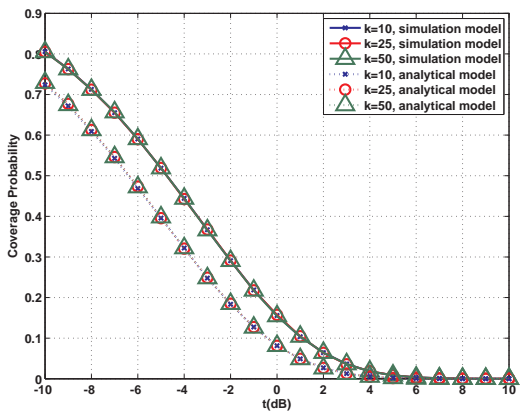}
\caption{ Coverage probability for $k=\{10, 25, 50\}$ with full power control ($\epsilon=1$), without noise, $\alpha = 2.5$, $\lambda_b=0.24$}
\label{fig:fig_V_1}
\end{figure}

Fig. \ref{fig:fig_V_2} illustrates the coverage probability for cell-interior ($l=1$) and cell-edge ($l=k$) users with $k=25$ and a power control factor $\epsilon=0.75$. We observe how both analytical and simulation models still behave quite close to each other. In both models, since the compensation of path loss is not total, transmissions from users closer to the BS are associated to higher SINR values than those in the cell-edge, so there exists a difference in coverage between users.
\begin{figure}[ht]
\centering
\includegraphics[width=2.75in]{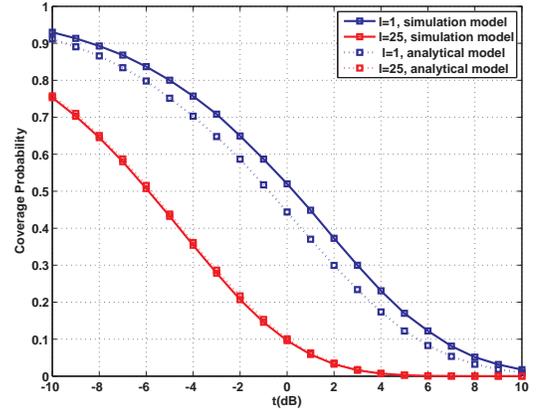}
\caption{ Coverage probability for cell-interior $l=1$ and cell-edge $l=k$ with $k=25$, $\epsilon=0.75$, without noise, $\alpha = 2.5$, $\lambda_b=0.24$}
\label{fig:fig_V_2}
\end{figure}

Fig. \ref{fig:fig_V_3} shows the coverage probability in the absence of power control, which corresponds to the worse case in terms of fairness. Hence, we observe  that the difference in coverage between cell-interior $l=1$ and cell-edge $l=k$ users is maximal. We also see how for the cell-edge user the analytical model yields a coverage significantly greater than the simulation model. The reason behind that is related to the different distribution of points used to model active user locations in both models. As mentioned in section \ref{sec:System Model}, in the analytical model user locations form a PPP, whereas in the simulation model this does not hold. This issue has a significant impact on the pdf of the distances $R_l$ specially for cell-edge users, and is addressed in detail in the next subsection. 
\begin{figure}[ht]
\centering
\includegraphics[width=2.75in]{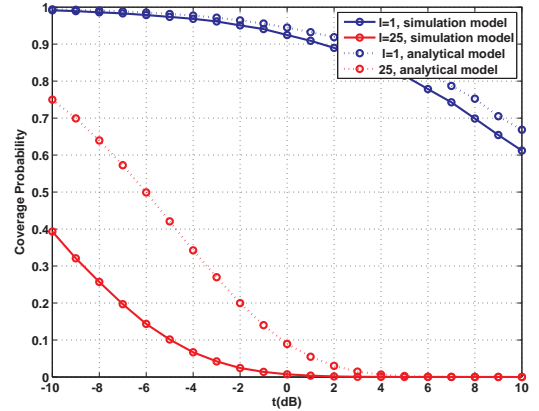}
\caption{ Coverage probability for cell-interior $l=1$ and cell-edge $l=k$ with $k=25$, without power control $\epsilon=0$, without noise, $\alpha = 2.5$, $\lambda_b=0.24$ }
\label{fig:fig_V_3}
\end{figure}
%
\subsection{Marginal distributions of distances}
One of the assumptions of the proposed model follows from \cite{Novlan13} and states that $R_x$ with $x \in \Phi_i$ are i.i.d. Rayleigh distributed RVs. Fig. \ref{fig:fig_V_4} shows the theoretical (Rayleigh) distribution used in the analytical model and the empirical distribution obtained from the simulation model. We observed that both pdfs are quite similar, so it is expected that the statistics of the transmitted power of the interfering users are also close to each other.
\begin{figure}[ht]
\centering
\includegraphics[width=2.75in]{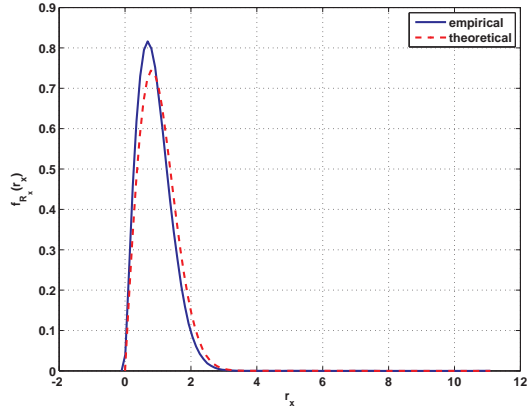}
\caption{Empirical and theoretical pdf of $R_x$.}
\label{fig:fig_V_4}
\end{figure}
\vspace{-2mm}

Fig. \ref{fig:fig_V_5} shows the marginal pdfs of $R_l$ for the closest and the farthest user to the target BS. For the cell-interior user ($l=1$) we see that both the empirical and theoretical pdfs are rather similar; hence, we may expect that coverage results from both models are also similar (as illustrated in the previous figures). However, for cell-edge users both pdfs have different shapes. Specifically, we notice that the distances of cell-edge users in the analytical model tend to be lower than the distances in the simulation model. This explains the difference in coverage probability, specially in the absence of power control as exhibited in Fig. \ref{fig:fig_V_3}. Since the distance for the cell-edge user tends to be lower in the analytical model when power control is not used, the desired signal tends to be higher and so the coverage probability grows. This is mitigated by using power control, since this technique aims to obtain equal received power from all users independently of their positions.
\vspace{-2mm}
\begin{figure}[ht]
\centering
\includegraphics[width=2.75in]{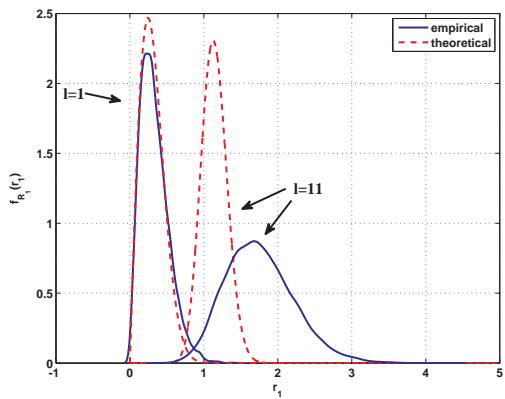}
\caption{Empirical and theoretical pdfs of $R_l$, ($l=1,k=11$) and ($l=k=11$)}
\label{fig:fig_V_5}
\end{figure}
\vspace{-2mm}
%
%

\vspace{-2mm}
\section{Conclusions}
\label{sec:Conclusions}
We proposed a tractable analysis model for multi-user uplink cellular networks based on conditional thinning. Assuming that there are $k$ active users scheduled on $k$ orthogonal resources, the joint distribution of the distances from the target BS to the $l^{th}$ user and to the farthest $k^{th}$ user have been obtained.  Thinning  outside the target cell with probability $1/k$ is used to obtain the actual set of interfering users. A more realistic model with BS distributed as PPP and one interfering user within its Voronoi cell has been simulated as well. Results show that fractional power control permits to increase fairness among users, at the expense of reducing the coverage probability of cell-interior users as $\epsilon$ grows. The coverage results provided by the analysis model are close to the simulation models when power control is used; the difference of behavior in the absence of power control is also discussed by studying the marginal distributions of the distances of the users to the serving BS.
%
\bibliographystyle{IEEEtran}
\bibliography{IEEEabrv,stochasticGeometry}

\begin{thebibliography}{10}
\providecommand{\url}[1]{#1}
\csname url@samestyle\endcsname
\providecommand{\newblock}{\relax}
\providecommand{\bibinfo}[2]{#2}
\providecommand{\BIBentrySTDinterwordspacing}{\spaceskip=0pt\relax}
\providecommand{\BIBentryALTinterwordstretchfactor}{4}
\providecommand{\BIBentryALTinterwordspacing}{\spaceskip=\fontdimen2\font plus
\BIBentryALTinterwordstretchfactor\fontdimen3\font minus
  \fontdimen4\font\relax}
\providecommand{\BIBforeignlanguage}[2]{{%
\expandafter\ifx\csname l@#1\endcsname\relax
\typeout{** WARNING: IEEEtran.bst: No hyphenation pattern has been}%
\typeout{** loaded for the language `#1'. Using the pattern for}%
\typeout{** the default language instead.}%
\else
\language=\csname l@#1\endcsname
\fi
#2}}
\providecommand{\BIBdecl}{\relax}
\BIBdecl

\bibitem{Himayat10}
N.~Himayat, S.~Talwar, A.~Rao, and R.~Soni, ``{Interference management for 4G
  cellular standards [WIMAX/LTE UPDATE]},'' \emph{IEEE Communications
  Magazine}, vol.~48, no.~8, pp. 86--92, Aug. 2010.

\bibitem{Wyner94}
A.~D. Wyner, ``{Shannon-theoretic approach to a Gaussian cellular
  multiple-access channel},'' \emph{IEEE Transactions on Information Theory},
  vol.~40, no.~6, pp. 1713--1727, 1994.

\bibitem{Tukmanov13}
A.~Tukmanov, Z.~Ding, S.~Boussakta, and A.~Jamalipour, ``On the impact of
  network geometric models on multicell cooperative communication systems,''
  \emph{IEEE Wireless Commun.}, vol.~20, no.~1, pp. 75--81, Feb. 2013.

\bibitem{Xu11}
J.~Xu, J.~Zhang, and J.~G. Andrews, ``{On the Accuracy of the Wyner Model in
  Cellular Networks},'' \emph{IEEE Transactions on Wireless Communications},
  vol.~10, no.~9, pp. 3098--3109, Sep. 2011.

\bibitem{Haenggi09}
M.~Haenggi, J.~G. Andrews, F.~Baccelli, O.~Dousse, and M.~Franceschetti,
  ``{Stochastic geometry and random graphs for the analysis and design of
  wireless networks},'' \emph{IEEE Journal on Selected Areas in
  Communications}, vol.~27, no.~7, pp. 1029--1046, Sep. 2009.

\bibitem{Andrews11}
J.~G. Andrews, F.~Baccelli, and R.~K. Ganti, ``{A Tractable Approach to
  Coverage and Rate in Cellular Networks},'' \emph{IEEE Transactions on
  Communications}, vol.~59, no.~11, pp. 3122--3134, Nov. 2011.

\bibitem{Novlan13}
T.~D. Novlan, H.~S. Dhillon, and J.~G. Andrews, ``{Analytical Modeling of
  Uplink Cellular Networks},'' \emph{IEEE Transactions on Wireless
  Communications}, vol.~12, no.~6, pp. 2669--2679, June 2013.

\bibitem{Novlan13b}
T.~D. Novlan and J.~G. Andrews, ``{Analytical Evaluation of Uplink Fractional
  Frequency Reuse},'' \emph{IEEE Transactions on Communications}, vol.~61,
  no.~5, pp. 2098--2108, May 2013.

\bibitem{Elsawy14}
H.~ElSawy and E.~Hossain, ``{On Stochastic Geometry Modeling of Cellular Uplink
  Transmission with Truncated Channel Inversion Power Control},'' \emph{arXiv
  preprint arXiv:1401.6145}, 2014.

\bibitem{Dhillon13}
H.~S. Dhillon, R.~K. Ganti, and J.~G. Andrews, ``{Modeling Non-Uniform UE
  Distributions in Downlink Cellular Networks},'' \emph{IEEE Wireless
  Communications Letters}, vol.~2, no.~3, pp. 339--342, June 2013.

\bibitem{Dhillon13b}
------, ``{Load Aware Modeling and Analysis of Heterogeneous Cellular
  Networks},'' \emph{IEEE Transactions on Wireless Communications}, vol.~12,
  no.~4, pp. 1666--1677, Apr. 2013.

\bibitem{Stoyan13}
S.~N. Chiu, D.~Stoyan, W.~S. Kendall, and J.~Mecke, \emph{{Stochastic Geometry
  and Its Applications}}.\hskip 1em plus 0.5em minus 0.4em\relax Wiley Series
  in Probability and Statistics, 2013.

\bibitem{Papoulis}
A.~Papoulis and S.~U. Pillai, \emph{Probability, Random Variables, and
  Stochastic Processes}, ser. McGraw-Hill series in electrical engineering:
  Communications and signal processing.\hskip 1em plus 0.5em minus 0.4em\relax
  Tata McGraw-Hill, 2002.

\bibitem{Haenggi05}
M.~Haenggi, ``{On distances in uniformly random networks},'' \emph{IEEE
  Transactions on Information Theory}, vol.~51, no.~10, pp. 3584--3586, Oct
  2005.

\end{thebibliography}
\end{document}